\documentclass[12pt]{article}
\usepackage[T1]{fontenc}
\usepackage[utf8]{inputenc}   
\usepackage{microtype}
\usepackage{comment}
\usepackage{tikz}
\usepackage{makecell}
\usetikzlibrary{decorations.markings}
\usetikzlibrary{arrows.meta} 
\usepackage[colorlinks,
            linkcolor=blue,
            anchorcolor=blue,
            citecolor=blue]{hyperref}
\usepackage{amsmath}
\usepackage{amsthm}
\usepackage{newtxtext}
\usepackage[cmintegrals,varg]{newtxmath}
\usepackage{graphicx}
\theoremstyle{plain} 

\newtheorem{lemma}{Lemma}

\newtheorem{proposition}{Proposition}

\theoremstyle{remark} 
\newtheorem{remark}{Remark}

\usepackage{arydshln}

\usepackage[authoryear]{natbib}
\usepackage{tabularx}
\usepackage{siunitx}
\usepackage{booktabs}
\newcolumntype{Y}{>{\centering\arraybackslash}X}

\usepackage{babel}

\usepackage{geometry}
\usepackage{setspace}
\makeatother

\title{Tweedie’s Formula and Score-Driven Updating}

\author{Peter Reinhard Hansen$^{\mathsection}$ and Chen Tong$^{\ddagger}$ \thanks{Chen
Tong acknowledges financial support from the Youth Fund of the National
Natural Science Foundation of China (72301227) and the Fujian Provincial
Natural Science Foundation of China (2025J08008).}
\\[0.1cm] \small $^{\mathsection}$Department of Economics, University of North Carolina at Chapel Hill
\\[-2mm] \small $^{\ddagger}$School of Economics, Xiamen University}

\date{\small \today}

\begin{document}

\maketitle

\begin{abstract}
Score-driven models update time-varying parameters using conditional likelihood
scores. This paper develops a Bayesian interpretation of such updates through
Tweedie's formula, which connects posterior mean corrections with marginal
scores. In Gaussian signal extraction, this gives an exact posterior-correction
identity. For natural exponential families, related identities characterize
posterior means in natural- and expectation-parameter spaces. Building on these
identities, we show that conjugate Bayesian filtering in expectation space
coincides exactly with an inverse-Fisher-scaled conditional score update under
local precision discounting. For general conditional densities, the exact
Bayesian correction involves a generally unavailable predictive-marginal score.
A local Gaussian approximation shows that the conditional likelihood score
provides the leading approximation to this posterior correction; under local
precision discounting, the predictive covariance becomes proportional to inverse
Fisher information, yielding the familiar inverse-Fisher-scaled score recursion.
The results clarify when score-driven updates are exact Bayesian filters and
when they should instead be viewed as tractable local approximations.
\end{abstract}
{\noindent}{\small\textit{Keywords:}}{\small{}
Tweedie's formula, score-driven models, empirical Bayes, observation-driven models, exponential dispersion models, filtering, variance functions.}{\small\par}


\setstretch{1.5}
\newpage

\section{Introduction}

Score-driven models provide a parsimonious way to introduce time variation in the parameters of conditional distributions. In the terminology of \citet{Cox:1981}, they are observation-driven models: the current observation directly updates the parameter governing the next conditional density. Their defining feature is that the update is proportional to the score of the conditional log likelihood with respect to the time-varying parameter. This framework was developed systematically by \citet{CrealKoopmanLucas:2013} and \citet{Harvey:2013}; see also \citet{ArtemovaBlasquesVanBrummelenKoopman:2022} for a recent overview. It also has roots in earlier score-based and robust filtering ideas, including the approximate non-Gaussian filter of \citet{Masreliez:1975}.

The appeal of score-driven models is well established. They generate fully observation-driven recursions, so the likelihood is available by prediction-error decomposition. They adapt automatically to the conditional density, since the score reflects the local shape of the likelihood. They also provide robust updating in heavy-tailed models, where unusual observations may have bounded or attenuated influence; see, for example, \citet{HarveyLuati:2014}. In addition, score-driven updates have information-theoretic foundations based on Kullback-Leibler criteria. \citet{BlasquesKoopmanLucas:2015} provide an influential local optimality argument, and \citet{GorgiLauriaLuati:2024} further develop this perspective. More recently, \citet{dePunderDimitriadisLange:2026} refine this interpretation by showing that the appropriate guarantee is expected Kullback-Leibler improvement, characterized by alignment between the expected update direction and the expected score.

This paper offers a complementary Bayesian interpretation. The starting point is Tweedie's formula.\footnote{To the best of our knowledge, Tweedie's formula has not previously been used to interpret GAS, DCS, or score-driven updating. As a check, the curated GAS bibliography at gasmodel.com listed 453 papers and articles as of May 2026, and a search of this list did not identify any prior use of the phrase ``Tweedie's formula'' in this context.} In the Gaussian location model, Tweedie's formula shows that the posterior mean correction from a noisy observation to the latent signal is exactly proportional to the score of the marginal predictive density.
The identity was reported by \citet{Robbins:1956}, who credited personal correspondence with M.~C.~K.~Tweedie, and was later emphasized by \citet{Efron:2011} in the context of empirical Bayes and selection bias. Thus, in this setting, the score is not merely a local likelihood direction; it is the exact Bayesian correction that moves the observation toward the latent quantity of interest.

The same idea extends beyond the Gaussian location case. For natural exponential families, one Tweedie identity expresses the posterior mean of the natural parameter as the score of the marginal density after correcting for the base measure. A companion identity characterizes the posterior mean of the expectation parameter \(\mu=\psi'(\theta)\). Under conjugate priors, this expectation-parameter identity has a simple closed form and becomes the algebraic source of the exact recursive filtering benchmark developed below. A distinct but related contribution of Tweedie is the variance-function index, which classifies exponential dispersion models by their variance functions \citep{Tweedie:1984,Jorgensen:1997book}. In the present context, this index clarifies how conditional scores normalize forecast errors across observation densities.

The connection to score-driven models requires distinguishing two different scores. The identities considered here involve marginal predictive scores, either with respect to the observation or with respect to a predictive parameter. Score-driven models typically use a different score: the score of the conditional density with respect to the time-varying parameter. In a dynamic model, the exact Bayesian update generally requires integration over the latent state and is unavailable in closed form. Score-driven models can therefore be viewed as replacing an unavailable marginal-score correction by a tractable scaled conditional-score correction evaluated at the current predictive parameter.

We make this connection precise by separating exact expectation-space filtering from local score-driven approximations. First, in the exact conjugate case, natural exponential-family observation densities and conjugate priors imply a closed-form posterior mean update for the expectation parameter \(\mu_t=\psi'(\theta_t)\). Under local precision discounting, this update is exactly an inverse-Fisher-scaled conditional-score correction before transition dynamics are imposed.

Second, we study what remains outside such exact conjugate cases. We derive a parameter-space Tweedie identity for dynamic filtering when the predictive distribution of the state belongs to a natural exponential family. This identity expresses the posterior correction as the score of the predictive marginal density with respect to the predictive mean. For general conditional densities, the predictive marginal score is typically unavailable. A local expansion shows that the conditional likelihood score gives the leading covariance-scaled approximation to the exact posterior correction. Combined with a local precision-discounting covariance specification, this yields the inverse-Fisher-scaled score correction used in score-driven models. Thus score-driven updating has a local Bayesian interpretation: the conditional
score provides the leading correction direction, while local precision
discounting yields the inverse-Fisher scaling used in score-driven recursions.

The examples illustrate the scope of the argument. In the Gaussian location model, Tweedie's formula reproduces the Kalman update. In Gaussian variance models, the relevant sufficient statistic is the squared observation, and inverse-Fisher scaling converts the variance score into the innovation in squared observations; embedded in an observation-driven transition equation, this yields GARCH-type recursions. In Poisson and Gamma models, the variance function determines how the conditional score normalizes raw forecast errors, while conjugate-prior updating in expectation space delivers the exact posterior-mean recursion under local precision discounting.

The contribution is therefore both interpretive and formal. The filtering interpretation of \citet{Masreliez:1975}, the score-driven model classes of \citet{CrealKoopmanLucas:2013} and \citet{Harvey:2013}, and the Kullback-Leibler optimality results of \citet{BlasquesKoopmanLucas:2015}, \citet{GorgiLauriaLuati:2024}, and \citet{dePunderDimitriadisLange:2026} remain central. Tweedie's formula adds a Bayesian signal-extraction perspective: in canonical empirical-Bayes problems, posterior corrections are marginal-score objects. The exact conjugate benchmark identifies a setting in which Bayesian updating can be written directly as an inverse-Fisher-scaled conditional-score correction in expectation space. The local expansion then explains why conditional scores arise as approximate Bayesian filtering corrections outside exact conjugate cases, while local precision discounting gives the inverse-Fisher scaling. Together, these results link empirical Bayes denoising, conjugate Bayesian updating, approximate filtering, and observation-driven time-series modeling.

\section{Tweedie's Formula for Gaussian and Natural Exponential Families}

\subsection{The Gaussian location case}

We begin with Tweedie's formula, which links the posterior expectation of a latent signal to the score of the marginal density of the observation. Consider the Gaussian location model
$$
Y=\mu+\varepsilon,\qquad \varepsilon\sim N(0,\sigma^2),
$$
where $\sigma^2$ is known and $\mu$ has prior density $\pi$.
Let
$$
f(y)=\int \phi_\sigma(y-\mu)\pi(\mu)d\mu
$$
denote the marginal density of $Y$, where $\phi_\sigma$ is the $N(0,\sigma^2)$ density.

\begin{lemma}[Tweedie's formula]
\label{lem:tweedie_gaussian}
Suppose $f(y)>0$. Suppose also that there is an open neighborhood $U$ of $y$ and an integrable function $H$ such that, for all $u\in U$,
$
\left|\partial_u\phi_\sigma(u-\mu)\pi(\mu)\right|\leq H(\mu)
$
for almost all $\mu$. Then
$$
\mathbb{E}[\mu|Y=y]
=
y+\sigma^2\partial_y\log f(y).
$$
Equivalently,
$\mathbb{E}[\mu|Y=y]-y=\sigma^2\partial_y\log f(y)$ and $\mathbb{E}[\varepsilon|Y=y]=-\sigma^2\partial_y\log f(y)$.
\end{lemma}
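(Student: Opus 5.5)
The plan is to differentiate the marginal density under the integral sign and read off the posterior mean from the explicit form of the Gaussian kernel's derivative. First, by the domination hypothesis and the dominated convergence theorem (the standard theorem on differentiation of parameter-dependent integrals), $f$ is differentiable at $y$ and
$$
f'(y)=\int \partial_y\phi_\sigma(y-\mu)\,\pi(\mu)\,d\mu .
$$
To justify this, I would take difference quotients $[\phi_\sigma(y+h-\mu)-\phi_\sigma(y-\mu)]\pi(\mu)/h$ for small $h$ with $y+h\in U$. By the mean value theorem these are bounded in absolute value by $H(\mu)$ for almost all $\mu$, and they converge pointwise to $\partial_y\phi_\sigma(y-\mu)\pi(\mu)$. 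This is the only step that uses the hypothesis, and it is the only place where some care is needed: the bound must hold uniformly over the segment between $y$ and $y+h$, which is why the hypothesis is stated on a neighborhood $U$. It is routine otherwise.

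Second, I use the explicit derivative of the kernel, $\partial_y\phi_\sigma(y-\mu)=-\frac{y-\mu}{\sigma^2}\phi_\sigma(y-\mu)$. This gives
$$
\sigma^2 f'(y)=\int(\mu-y)\,\phi_\sigma(y-\mu)\pi(\mu)\,d\mu .
$$
Because the integrand here equals $\sigma^2$ times the dominated function, it is integrable. Hence $\int\mu\,\phi_\sigma(y-\mu)\pi(\mu)\,d\mu$ is finite. This matters because it makes the posterior mean well defined, so the claim is not vacuous.

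Third, since $f(y)>0$, the posterior density is $\pi(\mu\mid y)=\phi_\sigma(y-\mu)\pi(\mu)/f(y)$. Dividing the previous display by $f(y)$ yields
$$
\sigma^2\,\frac{f'(y)}{f(y)}=\mathbb{E}[\mu\mid Y=y]-y,
$$
which is the stated formula, because $f'(y)/f(y)=\partial_y\log f(y)$. The equivalent forms then follow immediately. The first is a rearrangement. The second uses $\varepsilon=Y-\mu$, so that $\mathbb{E}[\varepsilon\mid Y=y]=y-\mathbb{E}[\mu\mid Y=y]=-\sigma^2\partial_y\log f(y)$.
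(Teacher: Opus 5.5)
Your proposal is correct and follows the paper's proof: you pass the derivative under the integral using the domination condition, use $\partial_y\phi_\sigma(y-\mu)=-\sigma^{-2}(y-\mu)\phi_\sigma(y-\mu)$, and divide by $f(y)$. Your extra remarks fill in details the paper leaves implicit: the mean-value-theorem justification, and the observation that domination makes $\int\mu\,\phi_\sigma(y-\mu)\pi(\mu)\,d\mu$ finite, so the posterior mean exists.
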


\begin{proof}
By the stated domination condition, differentiation may be passed under the integral sign. Hence
$
f^\prime(y)
=
\int -\frac{y-\mu}{\sigma^2}\phi_\sigma(y-\mu)\pi(\mu)d\mu
=
\frac{1}{\sigma^2}
\left\{
\int \mu\phi_\sigma(y-\mu)\pi(\mu)d\mu-yf(y)
\right\}$. 
Dividing by $f(y)$ yields
$\sigma^2 f^\prime(y)/f(y)=\mathbb{E}[\mu|Y=y]-y$, which proves the result.
\end{proof}

Thus, in the Gaussian location model, the Bayesian correction from the noisy observation $y$ to the latent signal $\mu$ is exactly the score of the marginal density, scaled by the noise variance. The prior enters only through the marginal density $f$. Consequently, if the marginal density, or its score, can be specified or estimated, the posterior mean can be recovered directly. This is the empirical-Bayes interpretation emphasized by \citet{Efron:2011}.

The Gaussian-convolution identity is especially transparent because the latent signal may have an arbitrary prior distribution.\footnote{For example, if $Y=X+Z$, where $Z\sim N(0,\sigma^2)$ and $X$ is Cauchy, then $Y$ has a Voigt density. Lemma \ref{lem:tweedie_gaussian} gives $\mathbb{E}[X|Y=y]=y+\sigma^2\partial_y\log f(y)$, and hence $\mathbb{E}[Z|Y=y]=-\sigma^2\partial_y\log f(y)$. In \citet{HansenTong:2026Voigt}, this identity yields the conditional Gaussian component in the Gauss-Cauchy convolution and produces a redescending update.} At the same time, the Gaussian location identity is recovered as a special case of the natural-exponential-family identity below after writing the Gaussian density in natural-parameter form.

For the purposes of this paper, the key implication is that the score appears as the posterior-mean correction that denoises $y$. In the Gaussian location model, moving in the direction of the marginal score is not merely a local likelihood adjustment or a robustness device; it is the exact posterior-mean correction. This observation provides the first step toward the connection with score-driven models. In dynamic settings, the exact marginal score is typically unavailable because it requires integrating over the latent state. Score-driven models replace this generally intractable Bayesian correction by a tractable recursive update based on the conditional likelihood score evaluated at the current predictive parameter. The distinction between these two scores is central to the argument developed below.

\subsection{Natural exponential families}

The Gaussian identity in Lemma~\ref{lem:tweedie_gaussian} is a special case of a natural-exponential-family identity. This version of Tweedie's formula is attributed by \citet{Efron:2011} to \citet{Robbins:1956} and is closely related to standard exponential-family identities; see also \citet{Barndorff-Nielsen:1978book}. In this setting, the posterior expectation of the natural parameter is determined by a marginal score after correcting for the base measure.

Consider the case in which the conditional observation distribution of $Y$ given $\theta$ belongs to a natural exponential family,
$$
p(y|\theta)=h(y)\exp\{\theta y-\psi(\theta)\},
$$
where $\theta$ is the natural parameter and $\psi$ is the cumulant generating function.
Let
$$
f(y)=\int p(y|\theta)\pi(\theta)d\theta
$$
denote the marginal density of $Y$ induced by a prior distribution $\pi$ for $\theta$.

\begin{lemma}[Natural-exponential-family Tweedie identity]
\label{lem:tweedie_nef}
Suppose $f(y)>0$ and $h(y)>0$. Suppose also that there is an open neighborhood $U$ of $y$ and an integrable function $H$ such that, for all $u\in U$,
$\left|\theta\exp\{\theta u-\psi(\theta)\}\pi(\theta)\right|\leq H(\theta)$ for almost all $\theta$. Then
$$
\mathbb E[\theta|Y=y]
=
\partial_y\log\frac{f(y)}{h(y)}
=
\partial_y\log f(y)-\partial_y\log h(y).
$$
\end{lemma}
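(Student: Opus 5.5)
The plan is to mirror the proof of Lemma~\ref{lem:tweedie_gaussian}. The key step is to divide out the base measure before differentiating. Because $h(y)>0$, the ratio
$$
g(u)=\frac{f(u)}{h(u)}=\int \exp\{\theta u-\psi(\theta)\}\pi(\theta)\,d\theta
$$
is well defined at $u=y$. The identity $f(u)=h(u)g(u)$ holds for every $u\in U$. It is the integral representation of $g$, not of $f$, that I will differentiate. This matters because $h$ need not be differentiable, or even continuous, in the discrete cases (Poisson, for example). The statement should therefore be read with $\partial_y\log(f/h)$ understood as $g'(y)/g(y)$. The split $\partial_y\log f-\partial_y\log h$ is meaningful only when $h$ is differentiable at $y$.

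First, I will pass the derivative under the integral sign. The integrand's $u$-derivative is $\theta\exp\{\theta u-\psi(\theta)\}\pi(\theta)$. By hypothesis, its absolute value is dominated by the integrable $H(\theta)$ uniformly for $u\in U$. The standard differentiation-under-the-integral theorem (dominated convergence applied to difference quotients via the mean value theorem) then gives
$$
g'(y)=\int \theta\exp\{\theta y-\psi(\theta)\}\pi(\theta)\,d\theta .
$$
This step also needs $g(u)<\infty$ at some $u\in U$. That follows from $f(y)<\infty$, since $f(y)>0$ is implicitly a finite density value.

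Second, I will divide by $g(y)=f(y)/h(y)>0$. The ratio
$$
\frac{g'(y)}{g(y)}=\frac{\int \theta\, h(y)\exp\{\theta y-\psi(\theta)\}\pi(\theta)\,d\theta}{\int h(y)\exp\{\theta y-\psi(\theta)\}\pi(\theta)\,d\theta}
$$
is exactly $\mathbb E[\theta|Y=y]$ by Bayes' rule, because the posterior density is $p(y|\theta)\pi(\theta)/f(y)$. Finiteness of this posterior mean follows from the domination bound. Hence $\mathbb E[\theta|Y=y]=\partial_y\log g(y)=\partial_y\log\{f(y)/h(y)\}$. When $h$ is differentiable at $y$, this equals $\partial_y\log f(y)-\partial_y\log h(y)$.

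The only delicate point is the justification for interchanging derivative and integral. I expect this to be routine given the stated dominating function $H$, so the main care lies in the bookkeeping about $h$ just described. As a consistency check, I will take the Gaussian case $\theta=\mu/\sigma^2$ with $h(y)\propto e^{-y^2/(2\sigma^2)}$. There $-\partial_y\log h(y)=y/\sigma^2$, and the identity reduces to Lemma~\ref{lem:tweedie_gaussian}.
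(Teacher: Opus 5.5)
Your proposal is correct and follows the paper's proof: write $f/h$ as $\int \exp\{\theta u-\psi(\theta)\}\pi(\theta)\,d\theta$, differentiate under the integral sign using the dominating function $H$, and recognize the resulting ratio as $\mathbb E[\theta\mid Y=y]$ by Bayes' rule. Your extra bookkeeping makes explicit what the paper leaves implicit: reading $\partial_y\log(f/h)$ as $g'(y)/g(y)$ when $h$ is not differentiable, finiteness of $g(y)$, and integrability of the posterior mean via $H$.
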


\begin{proof}
Since we know that 
$f(y)/h(y)=\int \exp\{\theta y-\psi(\theta)\}\pi(\theta)d\theta$, 
differentiating its logarithm gives
$
\partial_y\log\frac{f(y)}{h(y)}
=
\frac{\int \theta\exp\{\theta y-\psi(\theta)\}\pi(\theta)d\theta}
{\int \exp\{\theta y-\psi(\theta)\}\pi(\theta)d\theta}
=
\mathbb{E}[\theta|Y=y]$.
This proves the identity.
\end{proof}

Lemma \ref{lem:tweedie_nef} shows that posterior learning about the natural parameter is governed by the score of the marginal density, adjusted by the score of the base measure. The adjustment is essential: outside special cases, the posterior mean of the natural parameter is not obtained from the marginal score $\partial_y\log f(y)$ alone.

The Gaussian location model illustrates the connection with Lemma \ref{lem:tweedie_gaussian}. If
$Y|\mu\sim N(\mu,\sigma^2)$, then the natural parameter is $\theta=\mu/\sigma^2$ and
$h(y)\propto\exp\left\{-y^2/(2\sigma^2)\right\}$. 
Hence $-\partial_y\log h(y)=y/\sigma^2$, and Lemma \ref{lem:tweedie_nef} gives
$$
\mathbb{E}\left[\frac{\mu}{\sigma^2}|Y=y\right]
=
\partial_y\log f(y)+\frac{y}{\sigma^2}.
$$
Multiplying by $\sigma^2$ yields the Gaussian form of Tweedie's formula in Lemma \ref{lem:tweedie_gaussian}.

For the recursive filtering result below, the expectation parameter
\(\mu=\psi'(\theta)\) is also important. The following identity targets
\(\mathbb E(\mu\mid Y=y)\). Unlike Lemma \ref{lem:tweedie_nef}, it is obtained
by integration by parts with respect to the natural parameter.

\begin{lemma}[Expectation-parameter posterior identity]
\label{lem:nef_expectation_parameter_identity}
For natural exponential family
$$
p(y|\theta)=h(y)\exp\{\theta y-\psi(\theta)\},
\qquad
\mu=\psi^\prime(\theta).
$$
Let $\pi(\theta)$ be a differentiable prior density on the natural parameter space. Suppose the boundary term
$\exp\{\theta y-\psi(\theta)\}\pi(\theta)$ vanishes at the endpoints of the natural parameter space. Then
$$
\mathbb{E}(\mu|Y=y)
=
y+\mathbb{E}\{\partial_\theta\log \pi(\theta)|Y=y\}.
$$
In particular, if $\pi(\theta)\propto\exp\{\tau\theta-n\psi(\theta)\}$, where $n>0$, then
$$
\mathbb{E}(\mu|Y=y)
=
\frac{\tau+y}{n+1}.
$$
\end{lemma}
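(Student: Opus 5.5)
The plan is to write the posterior of $\theta$ given $Y=y$ as proportional to $g(\theta)\equiv\exp\{\theta y-\psi(\theta)\}\pi(\theta)$. The factor $h(y)>0$ cancels, and the normalizing constant $c(y)=\int g(\theta)\,d\theta$ is positive and finite because the marginal density is positive. The key observation is that differentiating $g$ with respect to $\theta$ produces exactly the expectation parameter:
$$
\partial_\theta g(\theta)=\{y-\psi'(\theta)+\partial_\theta\log\pi(\theta)\}\,g(\theta)
=\{y-\mu+\partial_\theta\log\pi(\theta)\}\,g(\theta).
$$

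Next I would integrate this over the natural parameter space $(\underline\theta,\bar\theta)$. By the fundamental theorem of calculus, the left side equals the boundary term $g(\bar\theta)-g(\underline\theta)$, which vanishes by assumption. Hence
$$
0=\int\{y-\mu+\partial_\theta\log\pi(\theta)\}g(\theta)\,d\theta .
$$
Dividing by $c(y)$ turns this into
$$
0=y-\mathbb{E}(\mu|Y=y)+\mathbb{E}\{\partial_\theta\log\pi(\theta)|Y=y\},
$$
which is the first claim after rearranging.

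The main obstacle is justifying that integration and splitting the integral into separate posterior expectations. This requires $\int|\partial_\theta g|\,d\theta<\infty$, and integrability of $\psi'(\theta)g(\theta)$ and $\partial_\theta\log\pi(\theta)\,g(\theta)$ separately. In the proof I would state these as implicit regularity conditions: both posterior expectations are assumed to exist, and $g$ is absolutely continuous on the open interval, with the boundary condition read as limits at the endpoints. Under these conditions the step goes through directly.

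For the conjugate case, $\partial_\theta\log\pi(\theta)=\tau-n\psi'(\theta)=\tau-n\mu$. Substituting this gives
$$
\mathbb{E}(\mu|Y=y)=y+\tau-n\,\mathbb{E}(\mu|Y=y),
$$
so $\mathbb{E}(\mu|Y=y)=(\tau+y)/(n+1)$. The only additional point here is that the boundary condition now concerns $\exp\{(\tau+y)\theta-(n+1)\psi(\theta)\}$, which is exactly the stated hypothesis.
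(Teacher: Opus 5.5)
Your proposal is correct and follows the paper's proof essentially line by line: you integrate the derivative of the posterior kernel $\exp\{\theta y-\psi(\theta)\}\pi(\theta)$ over the natural parameter space, use the vanishing boundary term, divide by the normalizing constant, and then substitute $\partial_\theta\log\pi(\theta)=\tau-n\mu$ in the conjugate case. Your explicit regularity assumptions (existence of both posterior expectations and absolute continuity of the kernel) make precise conditions that the paper leaves implicit.
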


\begin{proof}
The posterior kernel is $\exp\{\theta y-\psi(\theta)\}\pi(\theta)$. The boundary condition gives 
$$0=\int \frac{d}{d\theta}\left[\exp\{\theta y-\psi(\theta)\}\pi(\theta)\right]d\theta.$$ 
Dividing by the posterior normalizing constant yields $0=y-\mathbb{E}\{\psi^\prime(\theta)|Y=y\}
+\mathbb{E}\{\partial_\theta\log\pi(\theta)|Y=y\}$. Since $\mu=\psi^\prime(\theta)$, the first identity follows. Finally, if
$\pi(\theta)\propto\exp\{\tau\theta-n\psi(\theta)\}$, then
$\partial_\theta\log\pi(\theta)=\tau-n\mu$, so
$\mathbb{E}(\mu|Y=y)=y+\tau-n\mathbb{E}(\mu|Y=y)$, which gives
$\mathbb{E}(\mu|Y=y)=(\tau+y)/(n+1)$.
\end{proof}

The two identities target different posterior means. Lemma
\ref{lem:tweedie_nef} gives a marginal-score representation for the posterior
mean of the natural parameter \(\theta\). Lemma
\ref{lem:nef_expectation_parameter_identity} gives an identity for the posterior
mean of the expectation parameter \(\mu=\psi'(\theta)\). The latter is the
parameterization in which inverse-Fisher scaling turns the conditional score
into the forecast error \(y-\mu\), and it is the object used in the exact
recursive filtering result below.

For discrete natural exponential families, such as the Poisson distribution, the same empirical-Bayes principle applies, but the derivative with respect to the observation is replaced by a finite-difference or probability-ratio identity. This qualification concerns the marginal Tweedie identity, not the conditional likelihood score with respect to a continuous parameter.

The implication for score-driven models is therefore precise. Lemmas
\ref{lem:tweedie_nef} and \ref{lem:nef_expectation_parameter_identity} show that
natural exponential families provide exact Bayesian identities for both natural
and expectation parameters. These identities should not be confused with the
conditional parameter score used in score-driven recursions. Tweedie's identity
differentiates a predictive marginal density with respect to the observation,
whereas score-driven models differentiate the conditional density with respect
to the time-varying parameter. The exact expectation-space filtering result in
Proposition \ref{prop:exact_nef_conjugate} and the local approximation results below
make this distinction explicit.

\section{Tweedie's Index and Variance-Function Scaling}

The preceding section used Tweedie's formula to explain why marginal scores arise in Bayesian signal extraction. We now turn to a second, distinct contribution by Tweedie: the classification of exponential dispersion models through their variance functions \citep{Tweedie:1984,Jorgensen:1987}. Tweedie's formula explains why scores appear as Bayesian posterior corrections. Tweedie's index helps explain how conditional scores normalize forecast errors in exponential dispersion models.

The distinction between normalization and scaling is important. The conditional likelihood score itself often normalizes raw forecast errors by the variance structure of the observation density. Score-driven models may then apply an additional scaling matrix, denoted $S_t$, often based on Fisher information. Thus, two operations should be kept conceptually separate: variance-function normalization inside the likelihood score, and score-driven scaling applied to that score in the recursion.

For expositional simplicity, this section uses scalar notation. The scalar case is sufficient for the equivalence developed below. Multivariate extensions require the corresponding gradient and information-matrix notation, with variance functions replaced by covariance structures, and are not pursued here.

For an exponential dispersion model, let $\mu$ denote the conditional mean and let $\phi$ denote the dispersion parameter. The conditional variance is
\[
\operatorname{var}(Y|\mu)=\phi V(\mu),
\]
where $V(\mu)$ is the variance function. The following lemma records the corresponding score normalization.

\begin{lemma}[Variance-function normalization]
\label{lem:variance_function_score}
For an exponential dispersion model with conditional mean $\mu$ and variance function $V(\mu)$,
$$
\partial_\mu\ell(y;\mu)=\frac{y-\mu}{\phi V(\mu)},
\qquad
\mathcal{I}_\mu(\mu)=\frac{1}{\phi V(\mu)}.
$$
Consequently, $\mathcal{I}_\mu(\mu)^{-1}\partial_\mu\ell(y;\mu)=y-\mu$.
\end{lemma}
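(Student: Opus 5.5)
I will write the exponential dispersion model in its canonical form,
$$
p(y|\theta,\phi)=a(y,\phi)\exp\left\{\frac{\theta y-\psi(\theta)}{\phi}\right\},
$$
where $\theta$ is the canonical parameter, $\psi$ is the cumulant function, and $\phi>0$ is the dispersion. The proof then has three steps.

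\textbf{Step 1 (mean and variance).} I recall the standard facts $\mu=\psi'(\theta)$ and $\operatorname{var}(Y|\theta)=\phi\psi''(\theta)$. These follow by differentiating the normalization identity $\int p(y|\theta,\phi)\,dy=1$ twice under the integral sign, by the same kind of interchange used in Lemma~\ref{lem:tweedie_gaussian}. By definition of the variance function, $V(\mu)=\psi''(\theta)$ with $\theta=(\psi')^{-1}(\mu)$. The map $\theta\mapsto\mu$ is a smooth bijection because $\psi''>0$ for a nondegenerate family. Hence $d\theta/d\mu=1/\psi''(\theta)=1/V(\mu)$.

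\textbf{Step 2 (score in the mean parameterization).} The log likelihood is $\ell(y;\mu)=\{\theta(\mu)y-\psi(\theta(\mu))\}/\phi+\log a(y,\phi)$. The chain rule gives
$$
\partial_\mu\ell(y;\mu)=\frac{y-\psi'(\theta)}{\phi}\,\frac{d\theta}{d\mu}=\frac{y-\mu}{\phi V(\mu)}.
$$

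\textbf{Step 3 (Fisher information and the scaled score).} I compute the information as the variance of the score:
$$
\mathcal I_\mu(\mu)=\mathbb E\{(\partial_\mu\ell)^2\}=\frac{\operatorname{var}(Y|\mu)}{\phi^2V(\mu)^2}=\frac{1}{\phi V(\mu)}.
$$
As a check, the same value comes from $-\mathbb E\{\partial_\mu^2\ell\}$, because the term involving $V'(\mu)$ multiplies $y-\mu$ and so has zero mean. The final claim is then immediate: $\mathcal I_\mu(\mu)^{-1}\partial_\mu\ell=\phi V(\mu)\cdot(y-\mu)/\{\phi V(\mu)\}=y-\mu$.

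\textbf{Main obstacle.} The only delicate point is Step 1. I need the regularity conditions that allow differentiation under the integral, namely that $\theta$ lies in the interior of the natural parameter space, where $\psi$ is analytic. I also need invertibility of $\psi'$, which requires $\psi''>0$, so that $V$ and $\mu$ are well defined. I would state these as the standing assumptions of an exponential dispersion model, as in \citet{Jorgensen:1987}, rather than reprove them.
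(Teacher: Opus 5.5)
Your proof is correct and follows the paper's argument: the chain rule with $d\theta/d\mu=1/V(\mu)$ applied to the natural-parameter score $(y-\mu)/\phi$, then the information computed as the variance of the mean score. Your derivation of $\mu=\psi'(\theta)$ and $\operatorname{var}(Y|\theta)=\phi\psi''(\theta)$, and your expected-Hessian cross-check, only make explicit standard facts that the paper takes for granted.
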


\begin{proof}
In mean parameterization, we have $d\theta/d\mu=1/V(\mu)$. Since the natural-parameter score is $(y-\mu)/\phi$, the chain rule gives
$\partial_\mu\ell(y;\mu)=\frac{y-\mu}{\phi V(\mu)}$. 
Taking the conditional variance of this score gives
$\mathcal{I}_\mu(\mu)
=
\frac{\operatorname{var}(Y|\mu)}{\phi^2V(\mu)^2}
=
\frac{1}{\phi V(\mu)}$.
The final identity follows immediately.
\end{proof}

Thus, before any additional score-driven scaling is applied, the likelihood score already transforms the raw forecast error $y-\mu$ by the variance function of the observation density. Inverse-Fisher scaling reverses this normalization and recovers the raw innovation in expectation space.

The Tweedie class is the subclass of exponential dispersion models with variance function
$$
V(\mu)=\mu^p.
$$
The scalar $p$ is the Tweedie index. Lemma \ref{lem:variance_function_score} gives
$\partial_\mu\ell(y;\mu)=\frac{y-\mu}{\phi\mu^p}$ and $\mathcal{I}_\mu(\mu)=\frac{1}{\phi\mu^p}$.
If a score-driven recursion scales the mean score by $\mathcal{I}_\mu(\mu)^{-d}$, then
$$
\mathcal{I}_\mu(\mu)^{-d}\partial_\mu\ell(y;\mu)
=
\frac{y-\mu}{\phi^{1-d}\mu^{p(1-d)}}.
$$
Thus, the Tweedie index determines how the unscaled conditional mean score normalizes forecast errors as the conditional mean changes, while the scaling exponent $d$ determines how much of this variance-function normalization is preserved in the score-driven update. With $d=0$, the update uses the fully variance-normalized score. With $d=1/2$, it uses a standardized forecast error. With $d=1$, it uses the raw innovation $y-\mu$ in expectation space. The last case is the one that aligns with the exact conjugate-prior Bayesian update in Proposition \ref{prop:exact_nef_conjugate}.

The parameterization also matters. For a transformed parameter, say $\eta=\log\mu$, we have 
$$
\partial_\eta\ell(y;\mu)
=
\partial_\mu\ell(y;\mu)\frac{\partial\mu}{\partial\eta}
=
\mu\frac{y-\mu}{\phi V(\mu)}.
$$
For the Tweedie class, this becomes
$\partial_\eta\ell(y;\mu)
=\frac{y-\mu}{\phi\mu^{p-1}}$. Thus, the variance-function logic remains, but the exact normalization depends on whether the update is written for the mean, the log mean, or another transformation.

The special role of inverse-Fisher scaling in expectation space should not be read as a universal prescription. When the information content of observations varies substantially over time, full inverse-Fisher scaling may amplify low-precision observations. For example, in a binomial model with logit parameter $\eta_t$, the Fisher information is $\mathcal{I}_t=N_tp_t(1-p_t)$, so the conditional variance of $\mathcal{I}_t^{-d}s_t$ is proportional to $\mathcal{I}_t^{1-2d}$. Hence $d=0$ preserves the information weighting supplied by the likelihood, $d=1/2$ standardizes the score, and $d=1$ amplifies observations with low information. In applications with declining risk sets or other sources of time-varying information, large scaling exponents can therefore make the recursion overly sensitive to noisy observations.

\section{Exact Expectation-Space Filtering under Conjugacy}\label{sec:ExactResults}

The preceding sections establish two complementary facts. First, Tweedie's
formula gives exact Bayesian identities for posterior means in static
signal-extraction problems. Second, in expectation parameterization, inverse
Fisher scaling converts the conditional score of an exponential dispersion model
into the raw forecast error. We now combine them in a recursive filtering
setting.

Consider the conditional observation density at time $t$,
$$
p_t(y_t|\theta_t)
=
h(y_t)\exp\{\theta_t y_t-\psi(\theta_t)\},
\qquad
\mu_t=\psi^\prime(\theta_t).
$$
Although conjugacy is naturally expressed in the natural parameter
$\theta_t$, the score representation of the exact Bayesian filtering update is
most transparent after transforming to the expectation parameter $\mu_t$.

Conjugate updating and discounting for dynamic generalized linear models are
classical; see \citet{WestHarrisonMigon:1985} and
\citet{WestHarrison:1989}. Proposition
\ref{prop:exact_nef_conjugate} is not to rederive this Bayesian recursion, but
to record its score-driven representation. The expectation-parameter identity
used below is the conjugate-prior special case of Lemma
\ref{lem:nef_expectation_parameter_identity}. The boundary condition rules out
integration-by-parts boundary terms and is satisfied in the usual conjugate
families used below.

\begin{proposition}[Exact expectation-space NEF score representation]
\label{prop:exact_nef_conjugate}
Let
$$
p_t(y_t|\theta_t)
=
h(y_t)\exp\{\theta_t y_t-\psi(\theta_t)\},
\qquad
\mu_t=\psi^\prime(\theta_t).
$$
Suppose the predictive prior density of $\theta_t|\mathcal F_{t-1}$ is
conjugate,
$$
\pi(\theta_t|\mathcal F_{t-1})
\propto
\exp\{\tau_{t|t-1}\theta_t-n_{t|t-1}\psi(\theta_t)\},
\qquad
n_{t|t-1}>0.
$$
Suppose the boundary condition in Lemma \ref{lem:nef_expectation_parameter_identity} holds. Then the filtered posterior is conjugate, with
$$
\tau_{t|t}=\tau_{t|t-1}+y_t,\qquad n_{t|t}=n_{t|t-1}+1.
$$
If the local precision-discounting condition $n_{t|t-1}=\delta n_{t|t}$ with $0<\delta<1$ holds, then $n_{t|t-1}=\delta/(1-\delta)$, and the exact Bayesian posterior mean in expectation space satisfies
$$
\mu_{t|t}
=
\mu_{t|t-1}
+
(1-\delta)
\mathcal I_\mu(\mu_{t|t-1})^{-1}
\left.
\partial_\mu\log p_t(y_t|\mu)
\right|_{\mu=\mu_{t|t-1}}.
$$
\end{proposition}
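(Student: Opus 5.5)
The proof will proceed in three steps.

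\textbf{Step 1: conjugacy of the filtered posterior.} By Bayes' rule the filtered density is proportional to $p_t(y_t|\theta_t)\,\pi(\theta_t|\mathcal F_{t-1})$. Dropping the factor $h(y_t)$, which does not involve $\theta_t$, this equals
$$
\exp\{(\tau_{t|t-1}+y_t)\theta_t-(n_{t|t-1}+1)\psi(\theta_t)\}.
$$
This has the same conjugate form, with $\tau_{t|t}=\tau_{t|t-1}+y_t$ and $n_{t|t}=n_{t|t-1}+1$. Substituting $n_{t|t-1}=\delta n_{t|t}$ into $n_{t|t}=n_{t|t-1}+1$ gives $n_{t|t}=1/(1-\delta)$. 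Hence $n_{t|t-1}=\delta/(1-\delta)$, which is positive because $0<\delta<1$.

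\textbf{Step 2: posterior mean and predictive mean.} The filtered posterior mean follows from the conjugate case of Lemma \ref{lem:nef_expectation_parameter_identity} with $\tau=\tau_{t|t-1}$ and $n=n_{t|t-1}$; the lemma's boundary condition is assumed in the proposition. This gives
$$
\mu_{t|t}=\frac{\tau_{t|t-1}+y_t}{n_{t|t-1}+1}.
$$
I will define $\mu_{t|t-1}=\mathbb E(\mu_t|\mathcal F_{t-1})$ and compute it in closed form. Apply the same integration-by-parts argument to the prior alone:
$$
0=\int \frac{d}{d\theta}\exp\{\tau\theta-n\psi(\theta)\}\,d\theta .
$$
This yields $\tau_{t|t-1}=n_{t|t-1}\mu_{t|t-1}$.

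\textbf{Main obstacle.} This second integration by parts needs its own boundary condition: the prior kernel itself, not the posterior kernel $e^{\theta y-\psi}\pi$, must vanish at the endpoints. The proposition does not state this. I would handle it in one of two ways. Either I add it to the hypotheses, remarking that it is the usual Diaconis--Ylvisaker condition and holds in the standard conjugate families the paper uses. Or I read $\mu_{t|t-1}$ as the prior-mode-based quantity $\tau_{t|t-1}/n_{t|t-1}$, which coincides with the prior mean under that condition.

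\textbf{Step 3: rearrange into score form.} With $\tau_{t|t-1}=n_{t|t-1}\mu_{t|t-1}$,
$$
\mu_{t|t}=\mu_{t|t-1}+\frac{y_t-\mu_{t|t-1}}{n_{t|t-1}+1}=\mu_{t|t-1}+(1-\delta)(y_t-\mu_{t|t-1}),
$$
using $1/(n_{t|t-1}+1)=1/n_{t|t}=1-\delta$. Finally, the NEF is an exponential dispersion model with $\phi=1$ and $V=\psi''\circ(\psi')^{-1}$. Lemma \ref{lem:variance_function_score}, evaluated at $\mu=\mu_{t|t-1}$, then gives $\mathcal I_\mu(\mu)^{-1}\partial_\mu\log p_t(y_t|\mu)=y_t-\mu$. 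Substituting this for $y_t-\mu_{t|t-1}$ completes the proof.
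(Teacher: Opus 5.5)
Your proposal is correct and follows the paper's proof essentially step for step: conjugate updating of $(\tau,n)$, the conjugate case of Lemma~\ref{lem:nef_expectation_parameter_identity} for $\mu_{t|t}$, integration by parts on the prior to get $\tau_{t|t-1}=n_{t|t-1}\mu_{t|t-1}$, the discounting algebra giving $1/(n_{t|t-1}+1)=1-\delta$, and Lemma~\ref{lem:variance_function_score} to turn $y_t-\mu_{t|t-1}$ into the inverse-Fisher-scaled score. The separate boundary condition you flag for the prior kernel is exactly what the paper assumes implicitly when it applies ``the same integration-by-parts argument'' to the predictive prior; for a proper conjugate prior on a regular (open natural parameter space) family it holds automatically by the Diaconis--Ylvisaker result, so adding it as a hypothesis costs nothing.
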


\begin{proof}
The same integration-by-parts argument as in Lemma
\ref{lem:nef_expectation_parameter_identity}, applied to the conjugate
predictive prior, gives $\mu_{t|t-1}
=\mathbb E(\mu_t|\mathcal F_{t-1})={\tau_{t|t-1}}/{n_{t|t-1}}$.
After observing $y_t$, the likelihood contributes $\exp\{\theta_t y_t-\psi(\theta_t)\}$. Therefore the filtered density is again conjugate, with
$\tau_{t|t}=\tau_{t|t-1}+y_t$ and $n_{t|t}=n_{t|t-1}+1$. Applying the conjugate-prior special case of Lemma
\ref{lem:nef_expectation_parameter_identity} to the filtered density gives
$\mu_{t|t}={(\tau_{t|t-1}+y_t)}/{(n_{t|t-1}+1)}$.
Since $\tau_{t|t-1}=n_{t|t-1}\mu_{t|t-1}$,
$
\mu_{t|t}
=
\frac{n_{t|t-1}\mu_{t|t-1}+y_t}{n_{t|t-1}+1}
=
\mu_{t|t-1}
+
\frac{y_t-\mu_{t|t-1}}{n_{t|t-1}+1}$.
The local precision-discounting condition gives
$n_{t|t-1}=\delta n_{t|t}=\delta(n_{t|t-1}+1)$, and hence
$n_{t|t-1}=\delta/(1-\delta)$ and $(n_{t|t-1}+1)^{-1}=1-\delta$. Thus
$\mu_{t|t}=\mu_{t|t-1}+(1-\delta)(y_t-\mu_{t|t-1})$. Finally, by Lemma
\ref{lem:variance_function_score},
$\mathcal I_\mu(\mu)^{-1}\partial_\mu\log p_t(y_t|\mu)=y_t-\mu$. Evaluating at
$\mu=\mu_{t|t-1}$ gives the stated formula.
\end{proof}

Proposition \ref{prop:exact_nef_conjugate} is the recursive counterpart of the
expectation-parameter identity in Lemma
\ref{lem:nef_expectation_parameter_identity}. The conjugate prior is naturally
defined for the natural parameter \(\theta_t\), but the reported posterior mean
is that of the expectation parameter \(\mu_t=\psi'(\theta_t)\). Thus the result
should be read as a conjugate Bayesian update expressed in expectation space,
not as an arbitrary prior placed directly on \(\mu_t\).

The base-measure correction in Lemma \ref{lem:tweedie_nef} is not a missing term
in Proposition \ref{prop:exact_nef_conjugate}. It appears because Tweedie's
identity differentiates the predictive marginal density with respect to the
observation. By contrast, Proposition \ref{prop:exact_nef_conjugate} uses the
conditional likelihood score with respect to the expectation parameter, for
which the base measure drops out. The exact equivalence is obtained not by
adding the base-measure correction to a score-driven recursion, but by combining
conjugate updating with inverse-Fisher scaling in expectation space.

The proposition identifies a narrow setting in which a score-driven correction
is not merely a local Bayesian approximation but an exact Bayesian update. The
observation density must belong to a natural exponential family, the prior must
be conjugate, the prior strength must satisfy the local precision-discounting
condition, the correction must be written in expectation space, and inverse
Fisher information must be used as the score scaling. The equality $n_{t|t-1}
=\frac{\delta}{1-\delta}$ is the prior-strength value implied by
$n_{t|t-1}=
\delta n_{t|t}$, $n_{t|t}=n_{t|t-1}+1$, 
and is not derived here from a separate stochastic transition equation. This
corresponds to the role of the discount factor in dynamic generalized linear
models, where \(\delta\) controls the rate at which past observations are
down-weighted; see \citet[Ch.~12]{WestHarrison:1989}. If the recursion is
written in the natural parameter, in a transformed parameter such as a log mean,
or with a different scaling rule, the exact algebraic equivalence generally
fails, although the resulting update may remain a useful local or
transformed-parameter approximation.

The timing distinction also remains important. The filtering correction in
Proposition \ref{prop:exact_nef_conjugate} concerns the posterior mean before
transition dynamics are imposed. Once an explicit score-driven recursion
combines the correction with intercepts, autoregressive terms, or other
propagation rules, it should be interpreted as a deterministic rule for forming
the next predictive parameter, not as the posterior mean of the current latent
parameter itself.

Proposition \ref{prop:exact_nef_conjugate} gives an exact benchmark: in natural
exponential families with conjugate priors, the posterior mean update in
expectation space is exactly an inverse-Fisher-scaled conditional score update
under local precision discounting. We next turn to local approximation results
for general conditional densities, where exact conjugate updating is generally
unavailable.


\section{Local Score-Driven Approximations to Bayesian Filtering}


The exact filtering result in Section~\ref{sec:ExactResults} relies on conjugate updating in
expectation space. Outside such conjugate cases, the posterior mean is generally
not available in closed form. We therefore turn to local approximations. The
first step is to record an exact parameter-space identity: under a regular
predictive law for $\theta_t$, the Bayesian posterior correction can be written
as a predictive-marginal score. The second step is to approximate this marginal
score by the conditional likelihood score evaluated at the predictive mean.

Consider a general dynamic model in which
$$
y_t|\theta_t\sim p_t(y_t|\theta_t),
$$
where $\theta_t$ is a time-varying parameter and
$\mathcal F_{t-1}$ denotes the information available before observing
$y_t$. In a parameter-driven formulation, $\theta_t$ is stochastic
conditional on $\mathcal F_{t-1}$. We write
$$
a_t
=
\mathbb E(\theta_t|\mathcal F_{t-1}),
\qquad
P_t
=
\operatorname{var}(\theta_t|\mathcal F_{t-1})
$$
for its predictive mean and covariance. Bayes' rule gives
$\pi(\theta_t|y_t,\mathcal F_{t-1})
\propto
p_t(y_t|\theta_t)\pi(\theta_t|\mathcal F_{t-1})$, and the filtered estimate is
$\mathbb E(\theta_t|y_t,\mathcal F_{t-1})$.

Proposition \ref{prop:parameter_space_tweedie} is a parameter-space analogue of Tweedie's formula. It allows for a general conditional density $p_t(y_t|\theta_t)$, but it uses natural exponential family for 
$\theta_t|\mathcal F_{t-1}$.

\begin{proposition}[Parameter-space Tweedie identity]
\label{prop:parameter_space_tweedie}
Let $y_t|\theta_t\sim p_t(y_t|\theta_t)$, where $\theta_t\in\mathbb R^d$.
Suppose that, conditionally on $\mathcal F_{t-1}$, the predictive density of
$\theta_t$ belongs to a regular natural exponential family with identity
natural statistic:
$$
\pi(\theta|\mathcal F_{t-1};\eta_t)
=
h_t(\theta)\exp\{\eta_t^\prime\theta-A_t(\eta_t)\},
$$
where $a_t=\mathbb E(\theta_t|\mathcal F_{t-1};\eta_t)=\nabla_{\eta}A_t(\eta_t)$, and $P_t=\operatorname{var}(\theta_t|\mathcal F_{t-1};\eta_t)=\nabla_{\eta}^2A_t(\eta_t)$ is positive definite. Define the predictive marginal density by
$$
f_t(y_t;\eta_t)
=
\int p_t(y_t|\theta)\pi(\theta|\mathcal F_{t-1};\eta_t)d\theta .
$$
Suppose that $f_t(y_t;\eta_t)>0$ and that differentiation with respect to
$\eta_t$ may be passed under the integral sign. Then
$$
\nabla_{\eta_t}\log f_t(y_t;\eta_t)
=
\mathbb E(\theta_t|y_t,\mathcal F_{t-1})-a_t.
$$
Equivalently, under the mean parameterization,
$$
\mathbb E(\theta_t|y_t,\mathcal F_{t-1})-a_t
=
P_t\nabla_{a_t}\log f_t(y_t;a_t).
$$
\end{proposition}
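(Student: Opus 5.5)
The plan is to mirror the proof of Lemma~\ref{lem:tweedie_nef}, but with the roles of observation and parameter exchanged. There the observation entered the exponent linearly; here the natural parameter $\eta_t$ of the predictive law enters linearly against $\theta$. Writing out the predictive marginal gives
\[
f_t(y_t;\eta_t)=\int p_t(y_t|\theta)h_t(\theta)\exp\{\eta_t'\theta-A_t(\eta_t)\}\,d\theta .
\]
The conditional density $p_t(y_t|\theta)$ does not depend on $\eta_t$, so it behaves like a fixed weight in the integrand. This is why no exponential-family structure is needed for the observation density.

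The first step differentiates $\log f_t$ with respect to $\eta_t$, passing the gradient under the integral as permitted by hypothesis. The gradient of the integrand is
\[
\bigl(\theta-\nabla_\eta A_t(\eta_t)\bigr)\,p_t(y_t|\theta)\,\pi(\theta|\mathcal F_{t-1};\eta_t).
\]
Dividing by $f_t(y_t;\eta_t)>0$ and using Bayes' rule, $p_t(y_t|\theta)\pi(\theta|\mathcal F_{t-1};\eta_t)/f_t$ is the filtered posterior density. Hence the ratio equals $\mathbb E(\theta_t|y_t,\mathcal F_{t-1})-\nabla_\eta A_t(\eta_t)$. Since $a_t=\nabla_\eta A_t(\eta_t)$, this is the first identity.

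The second step moves to the mean parameterization through the chain rule. The map $\eta\mapsto a=\nabla_\eta A_t(\eta)$ has Jacobian $\nabla^2_\eta A_t(\eta_t)=P_t$, which is positive definite by assumption. For a regular natural exponential family this map is therefore a local diffeomorphism onto the mean space, and it is injective by strict convexity of $A_t$. So $f_t(y_t;a_t)$ is well defined as $f_t(y_t;\eta_t(a_t))$, and
\[
\nabla_{\eta_t}\log f_t=\Bigl(\frac{\partial a_t}{\partial\eta_t'}\Bigr)'\nabla_{a_t}\log f_t=P_t\nabla_{a_t}\log f_t ,
\]
using that $P_t$ is symmetric. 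Substituting into the first identity gives the stated equivalent form.

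The main obstacle is bookkeeping rather than analysis. One has to make sure the mean reparameterization is legitimate, meaning the invertibility of $\nabla A_t$ and the differentiability of its inverse. I would cite regularity of the family, via steepness and strict convexity of $A_t$, together with the inverse function theorem applied with the positive definite Hessian $P_t$. One also has to get the transpose convention in the chain rule right, which symmetry of $P_t$ settles.
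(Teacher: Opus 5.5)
Your proposal is correct and follows the paper's proof essentially step for step: differentiate under the integral to get $\theta-\nabla_\eta A_t(\eta_t)$, normalize by $f_t$ to obtain the posterior mean minus $a_t$, then apply the chain rule with Jacobian $P_t$. Your extra justification that the mean map is invertible (inverse function theorem from positive definite $P_t$), together with the symmetry/transpose check, is a small rigor addition the paper leaves implicit.
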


\begin{proof}
Differentiating the predictive marginal density with respect to $\eta_t$
gives $
\nabla_{\eta_t}f_t(y;\eta_t)
=
\int p_t(y|\theta)
\{\theta-\nabla_\eta A_t(\eta_t)\}
\pi(\theta|\mathcal F_{t-1};\eta_t)d\theta$. Since $a_t=\nabla_\eta A_t(\eta_t)$, division by $f_t(y;\eta_t)$ yields $\nabla_{\eta_t}\log f_t(y;\eta_t)=\mathbb E(\theta_t|y,\mathcal F_{t-1})-a_t$.
Evaluating at $y=y_t$ proves the first identity. For the mean
parameterization, the chain rule gives $
\nabla_{\eta_t}\log f_t(y_t;\eta_t)
=
\frac{\partial a_t}{\partial \eta_t^\prime}
\nabla_{a_t}\log f_t(y_t;a_t)
=
P_t\nabla_{a_t}\log f_t(y_t;a_t)$.
Combining the two identities proves the result.
\end{proof}

The key distinction is between the exact Bayesian correction and the operational
score-driven correction. Proposition \ref{prop:parameter_space_tweedie}
expresses the exact correction through the predictive-marginal score
$\nabla_{a_t}\log f_t(y_t;a_t,P_t)$. This score generally requires differentiating
an integral over the predictive distribution of the latent state and is rarely
available in closed form. Score-driven models instead use the tractable
conditional likelihood score
$s_t(a_t)=\nabla_\theta\log p_t(y_t|\theta)|_{\theta=a_t}$.
In this sense, they replace the exact marginal-score correction by a plug-in
conditional-score correction.

This interpretation is consistent with the forecasting evidence of
\citet{KoopmanLucasScharth:2016}: score-based observation-driven models can have
predictive accuracy close to that of correctly specified parameter-driven
models, even when the latter generate the data. The results below provide a
Bayesian approximation argument for why such near-equivalence can arise.

The Gaussian predictive law used in the local approximation below is a special
case of Proposition \ref{prop:parameter_space_tweedie}. With fixed covariance
$P_t$, the natural parameter is $\eta_t=P_t^{-1}a_t$, and the mean-parameter
form gives $\mathbb{E}(\theta_t|y_t,\mathcal F_{t-1})-a_t=P_t\nabla_{a_t}\log f_t(y_t;a_t,P_t)$.

\begin{proposition}[Local score approximation to the posterior correction]
\label{prop:small_variance_score_expansion}
Suppose that the parameter-space Tweedie identity in Proposition
\ref{prop:parameter_space_tweedie} holds, and that
$$
\theta_t|\mathcal F_{t-1}\sim N(a_t,P_t),
$$
where $P_t$ is positive definite. Let
$\ell_t(\theta)=\log p_t(y_t|\theta)$ and
$s_t(a_t)=\nabla_\theta\ell_t(\theta)|_{\theta=a_t}$.
Assume that $\ell_t$ is sufficiently smooth in a neighborhood of $a_t$, with derivatives bounded so that the local expansion below is valid after integration with respect to the Gaussian predictive distribution. Then, as $\|P_t\|\to0$,
$$
P_t\nabla_{a_t}\log f_t(y_t;a_t,P_t)
=
P_t s_t(a_t)+O(\|P_t\|^2),
$$
where the gradient is taken with respect to $a_t$, holding $P_t$ fixed. Consequently,
$$
\mathbb E(\theta_t|y_t,\mathcal F_{t-1})
=
a_t+P_t s_t(a_t)+O(\|P_t\|^2).
$$
\end{proposition}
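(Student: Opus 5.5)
The plan is to write the predictive marginal as a Gaussian smoothing of the likelihood and expand. With $\theta_t=a_t+P_t^{1/2}Z$, $Z\sim N(0,I_d)$, we have
$$
f_t(y_t;a_t,P_t)=\mathbb E_Z\bigl[\exp\{\ell_t(a_t+P_t^{1/2}Z)\}\bigr],
$$
so differentiating in $a_t$ with $P_t$ fixed gives
$$
\nabla_{a_t}\log f_t=\frac{\mathbb E_Z[s_t(a_t+P_t^{1/2}Z)\,e^{\ell_t(a_t+P_t^{1/2}Z)}]}{\mathbb E_Z[e^{\ell_t(a_t+P_t^{1/2}Z)}]}.
$$
This is the expectation of the conditional score $s_t(\theta)=\nabla_\theta\ell_t(\theta)$ under the posterior $\pi(\theta_t|y_t,\mathcal F_{t-1})$. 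Passing the derivative under the integral is licensed by the assumed domination, as in Proposition \ref{prop:parameter_space_tweedie}. The task therefore reduces to showing that this posterior-weighted average of $s_t$ equals $s_t(a_t)+O(\|P_t\|)$. Multiplying by $P_t$ then gives the $O(\|P_t\|^2)$ remainder, and the second display follows directly from the mean-parameter form of Proposition \ref{prop:parameter_space_tweedie}.

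For the expansion, I will Taylor expand around $a_t$. Write $u=P_t^{1/2}Z$, so that $\|u\|=O(\|P_t\|^{1/2})$ in every $L^q$ norm. Then
$$
s_t(a_t+u)=s_t(a_t)+H_t(a_t)u+R_1(u),\qquad \|R_1(u)\|\le C\|u\|^2,
$$
with $H_t=\nabla^2\ell_t$, and
$$
e^{\ell_t(a_t+u)}=e^{\ell_t(a_t)}\{1+s_t(a_t)'u+R_2(u)\},\qquad |R_2(u)|\le C\|u\|^2 e^{C\|u\|}.
$$
Because $u$ is mean zero, the denominator is $e^{\ell_t(a_t)}\{1+O(\|P_t\|)\}$. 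In the numerator, the leading terms are
$$
s_t(a_t)+\mathbb E[H_tu]+\mathbb E[uu']s_t(a_t)+O(\|P_t\|)=s_t(a_t)+O(\|P_t\|),
$$
using $\mathbb E u=0$ and $\mathbb E uu'=P_t$. The key point is that odd Gaussian moments vanish, which removes the $O(\|P_t\|^{1/2})$ terms and leaves an $O(\|P_t\|)$ error. Dividing the two expansions gives $\nabla_{a_t}\log f_t=s_t(a_t)+O(\|P_t\|)$.

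An equivalent route, which I may use as a check, is Gaussian integration by parts (Stein's lemma):
$$
\nabla_a f_t=\mathbb E[P_t^{-1}u\,e^{\ell_t(a+u)}].
$$
Expanding $e^{\ell_t}$ to second order, the $P_t^{-1}\mathbb E[uu']s_t$ term yields $s_t$. I will stay with the score-average form, however, because its remainders scale more cleanly.

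The main obstacle is making the remainders uniform once they are integrated against the Gaussian law. Taylor bounds hold only in a neighborhood of $a_t$, while $u$ has unbounded support and $e^{\ell_t}$ can grow. I will split the integral into $\{\|u\|\le r\}$ and its complement. On the ball, the smoothness hypothesis gives the bounds above, and Gaussian moments of $\|u\|^k e^{C\|u\|}$ are $O(\|P_t\|^{k/2})$. Off the ball, the mass is $O(\exp\{-r^2/(2\|P_t\|)\})$, which is $o(\|P_t\|^k)$ for every $k$. This requires $e^{\ell_t}$ and $\|s_t\|e^{\ell_t}$ to have at most exponential growth, and I will read that into the stated assumption that the derivatives are ``bounded so that the local expansion is valid after integration.'' With this step settled, the rest is bookkeeping of orders.
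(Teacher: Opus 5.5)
Your argument is correct, but your main route differs from the paper's. The paper's proof is essentially the Stein-form route you mention only as a check. It takes from Proposition~\ref{prop:parameter_space_tweedie} that $P_t\nabla_{a_t}\log f_t=\mathbb E\{u_te^{\ell_t(a_t+u_t)}\}/\mathbb E\{e^{\ell_t(a_t+u_t)}\}$ with $u_t=\theta_t-a_t$, i.e.\ the posterior mean of $u_t$. It then expands $e^{\ell_t(a_t+u_t)}$ to second order with an $O(\|u_t\|^3)$ remainder. Using $\mathbb E u_t=0$, $\mathbb E u_tu_t'=P_t$ and the vanishing third Gaussian moment, it obtains $\{P_ts_t+O(\|P_t\|^2)\}/\{1+O(\|P_t\|)\}$.

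You instead use $\nabla_{a_t}\log f_t=\mathbb E\{s_t(\theta_t)\mid y_t,\mathcal F_{t-1}\}$ and need only first-order expansions. This gives a stronger intermediate statement, $\nabla_{a_t}\log f_t=s_t(a_t)+O(\|P_t\|)$, with constants that do not depend on the conditioning of $P_t$. The paper's bound on $P_t\nabla_{a_t}\log f_t$ cannot be turned into this by multiplying by $P_t^{-1}$ when $P_t$ is ill-conditioned. Your route also gives a sharp reading of the score-driven plug-in: the exact marginal score is the posterior average of the conditional score, and $s_t(a_t)$ evaluates it at the predictive mean.

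The cost is regularity. Moving $\nabla_{a_t}$ onto $e^{\ell_t(a_t+P_t^{1/2}Z)}$ requires $\ell_t$ to be differentiable on the whole support, with $\|s_t\|e^{\ell_t}$ dominated. This is not the interchange licensed in Proposition~\ref{prop:parameter_space_tweedie}, which differentiates the Gaussian density $\phi_{P_t}(\theta-a_t)$ rather than the likelihood. It also exceeds the statement's smoothness assumption, which is only local near $a_t$. For example, a jump in $e^{\ell_t}$ away from $a_t$ adds a boundary term to your identity but leaves the paper's representation intact. You should either state global differentiability and the domination as an extra assumption, or start from the posterior-mean-of-$u_t$ form as the paper does.

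One small slip: the $O(\|P_t\|)$ cross term in your numerator is $H_tP_ts_t$, coming from $H_tu\,s_t'u$, not $\mathbb E[uu']s_t$, since $s_ts_t'\mathbb E u=0$. The order is unaffected.
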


\begin{proof}
By the parameter-space Tweedie identity in Proposition \ref{prop:parameter_space_tweedie}, we have $\mathbb{E}(\theta_t|y_t,\mathcal{F}_{t-1})-a_t
=P_t\nabla_{a_t}\log f_t(y_t;a_t,P_t)$.
Set $u_t=\theta_t-a_t$. Bayes' rule gives
$$
P_t\nabla_{a_t}\log f_t(y_t;a_t,P_t)
=
\frac{
\mathbb{E}\{u_t\exp[\ell_t(a_t+u_t)]|\mathcal{F}_{t-1}\}
}{
\mathbb{E}\{\exp[\ell_t(a_t+u_t)]|\mathcal{F}_{t-1}\}
}.
$$
Write $s_t=s_t(a_t)$ and $H_t=\nabla_\theta^2\ell_t(\theta)|_{\theta=a_t}$. A local expansion gives
$$
\exp[\ell_t(a_t+u_t)]
=
\exp[\ell_t(a_t)]
\left\{
1+s_t^\prime u_t
+\frac12 u_t^\prime(H_t+s_ts_t^\prime)u_t
+O(\|u_t\|^3)
\right\}.
$$
Since $u_t|\mathcal{F}_{t-1}\sim N(0,P_t)$, the numerator is
$\exp[\ell_t(a_t)]\{P_ts_t+O(\|P_t\|^2)\}$, while the denominator is
$\exp[\ell_t(a_t)]\{1+O(\|P_t\|)\}$. Dividing gives
$$
P_t\nabla_{a_t}\log f_t(y_t;a_t,P_t)
=
P_ts_t(a_t)+O(\|P_t\|^2).
$$
The posterior-mean expansion follows from the parameter-space Tweedie identity.
\end{proof}

Proposition \ref{prop:small_variance_score_expansion} turns the preceding
distinction into a local approximation. When the predictive distribution of
\(\theta_t\) is concentrated around \(a_t\), the exact Tweedie correction is
approximated by $P_t s_t(a_t)$, with remaining terms of order \(O(\|P_t\|^2)\). Thus the conditional likelihood score enters as the leading term of the Bayesian posterior correction.

When the time-varying state is the expectation parameter, so that
\(\theta_t=\mu_t\) and \(a_t=\mu_{t|t-1}\), this leading correction connects
directly to the exact benchmark in Proposition
\ref{prop:exact_nef_conjugate}. For an exponential dispersion model, Lemma
\ref{lem:variance_function_score} gives
$\mathcal I_\mu(a_t)^{-1}s_t(a_t)
=
y_t-a_t$. Thus, if $
P_t=\kappa_P\mathcal I_\mu(a_t)^{-1}$, the leading posterior correction becomes
$P_t s_t(a_t)=\kappa_P(y_t-a_t)$. This is the local analogue of the exact conjugate update, with \(\kappa_P\) interpreted as the predictive-covariance scale.

Proposition \ref{prop:small_variance_score_expansion} shows why the conditional
score appears in the leading Bayesian correction. To recover the inverse-Fisher scaling used in score-driven models, it remains
to characterize when the predictive covariance multiplying this score is locally
proportional to \(\mathcal I_t(a_t)^{-1}\). The next result shows that this
covariance specification is equivalent to a local precision-discounting
condition.

\begin{proposition}[Inverse-Fisher correction under local precision discounting]
\label{prop:information_matched_covariance}
Suppose that the small-variance conclusion of Proposition
\ref{prop:small_variance_score_expansion} holds, so that
\[
\mathbb E(\theta_t\mid y_t,\mathcal F_{t-1})
=
a_t+P_t s_t(a_t)+O(\|P_t\|^2).
\]
Suppose in addition that the filtering step is locally approximated by the
Fisher-scoring quadratic surrogate around \(a_t\),
\[
\tilde \ell_t(\theta)
=
\ell_t(a_t)
+
s_t(a_t)'(\theta-a_t)
-
\tfrac12(\theta-a_t)'\mathcal I_t(a_t)(\theta-a_t),
\]
where \(\mathcal I_t(a_t)\) is positive definite. Under the local Gaussian
covariance approximation, let \(P_{t|t}\) denote the filtered covariance after
observing \(y_t\). Then
\[
P_{t|t}^{-1}
=
P_t^{-1}+\mathcal I_t(a_t).
\]
Suppose further that the local predictive precision is obtained by discounting
the filtered precision,
\[
P_t^{-1}
=
\delta P_{t|t}^{-1},
\qquad
0<\delta<1.
\]
Then $P_t=\kappa_P\mathcal I_t(a_t)^{-1}$, where  $\kappa_P=\frac{1-\delta}{\delta}$. If \(\|\mathcal I_t(a_t)^{-1}\|\) is bounded as \(\kappa_P\to0\), then
\[
\mathbb E(\theta_t\mid y_t,\mathcal F_{t-1})
=
a_t+
\kappa_P\mathcal I_t(a_t)^{-1}s_t(a_t)
+
O(\kappa_P^2).
\]
\end{proposition}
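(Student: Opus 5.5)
The proof has three steps, matching the three claims in the statement.

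\textbf{Step 1: the filtered precision.} I would work entirely within the local Gaussian covariance approximation. The predictive density $N(a_t,P_t)$ contributes the log kernel
\[
-\tfrac12(\theta-a_t)'P_t^{-1}(\theta-a_t).
\]
Adding the Fisher-scoring surrogate $\tilde\ell_t(\theta)$ gives a log posterior that is exactly quadratic in $\theta$. Its Hessian is $-(P_t^{-1}+\mathcal I_t(a_t))$. The filtered density under the surrogate is therefore Gaussian with precision
\[
P_{t|t}^{-1}=P_t^{-1}+\mathcal I_t(a_t),
\]
which is the first claim. This mirrors the conjugate step $n_{t|t}=n_{t|t-1}+1$ in Proposition \ref{prop:exact_nef_conjugate}, with $\mathcal I_t(a_t)$ playing the role of the unit increment in prior strength. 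Completing the square also gives the surrogate posterior mean $a_t+P_{t|t}s_t(a_t)$. I will not need this, because the stated expansion already supplies the mean.

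\textbf{Step 2: solving the discounting condition.} Substituting $P_t^{-1}=\delta P_{t|t}^{-1}$ into the precision identity gives
\[
P_t^{-1}=\delta P_t^{-1}+\delta\,\mathcal I_t(a_t),
\]
so that
\[
(1-\delta)P_t^{-1}=\delta\,\mathcal I_t(a_t).
\]
Since $\mathcal I_t(a_t)$ is positive definite and $0<\delta<1$, this can be inverted to give $P_t=\frac{1-\delta}{\delta}\mathcal I_t(a_t)^{-1}=\kappa_P\mathcal I_t(a_t)^{-1}$. This is the matrix analogue of the scalar computation $n_{t|t-1}=\delta/(1-\delta)$ in Proposition \ref{prop:exact_nef_conjugate}. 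The resulting $P_t$ is positive definite, which is consistent with the Gaussian predictive assumption.

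\textbf{Step 3: substituting into the small-variance expansion.} Inserting $P_t=\kappa_P\mathcal I_t(a_t)^{-1}$ into the assumed conclusion of Proposition \ref{prop:small_variance_score_expansion} gives the leading term $a_t+\kappa_P\mathcal I_t(a_t)^{-1}s_t(a_t)$. For the remainder, I would use $\|P_t\|\le\kappa_P\|\mathcal I_t(a_t)^{-1}\|$. Since $\|\mathcal I_t(a_t)^{-1}\|$ is bounded as $\kappa_P\to0$, it follows that $\|P_t\|=O(\kappa_P)$, and hence $O(\|P_t\|^2)=O(\kappa_P^2)$. This also confirms that $\|P_t\|\to0$ as $\kappa_P\to 0$, that is, as $\delta\to1$, so the small-variance regime in which Proposition \ref{prop:small_variance_score_expansion} applies is actually reached.

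The main point needing care is Step 3. The $O(\|P_t\|^2)$ term in Proposition \ref{prop:small_variance_score_expansion} must hold along the family of covariances $P_t=\kappa_P\mathcal I_t(a_t)^{-1}$ with a uniform constant. I would read the smoothness and bounded-derivative hypotheses there as providing such a constant. The boundedness assumption on $\mathcal I_t(a_t)^{-1}$ is then exactly what converts the $\|P_t\|$ rate into a $\kappa_P$ rate.

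A second point to state explicitly is the role of the two approximations. The surrogate $\tilde\ell_t$ is used only to fix the covariance recursion in Steps 1 and 2. The mean expansion comes from the exact-likelihood result of Proposition \ref{prop:small_variance_score_expansion}, so there is no circularity between the two.
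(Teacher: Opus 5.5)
Your proposal is correct and follows the paper's proof essentially step for step. Both derive $P_{t|t}^{-1}=P_t^{-1}+\mathcal I_t(a_t)$ by reading off the quadratic form in the surrogate Gaussian log kernel, solve $P_t^{-1}=\delta\{P_t^{-1}+\mathcal I_t(a_t)\}$ to get $P_t=\kappa_P\mathcal I_t(a_t)^{-1}$, and use the bound on $\|\mathcal I_t(a_t)^{-1}\|$ to turn $O(\|P_t\|^2)$ into $O(\kappa_P^2)$; your remarks on the uniformity of the remainder constant, and on the surrogate serving only to fix the covariance while the mean comes from the earlier expansion, are sound points the paper leaves implicit.
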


\begin{proof}
Under the Gaussian covariance approximation and the Fisher-scoring quadratic
surrogate, the surrogate posterior is proportional to $\exp\{\tilde\ell_t(\theta_t)\}\phi_{P_t}(\theta_t-a_t)$. Thus its log density is, up to a term independent of \(\theta_t\), $\tilde\ell_t(\theta_t)-
\frac12(\theta_t-a_t)'P_t^{-1}(\theta_t-a_t)$.

Substituting the surrogate likelihood gives
$
s_t(a_t)'(\theta_t-a_t)
-
\frac12(\theta_t-a_t)'
\{P_t^{-1}+\mathcal I_t(a_t)\}
(\theta_t-a_t)
+\mathrm{const}$. This is the log kernel of a Gaussian density. Hence the filtered covariance
under the local Gaussian surrogate satisfies $P_{t|t}^{-1}=P_t^{-1}+\mathcal I_t(a_t)$.

By the local precision-discounting condition, $
P_t^{-1}
=
\delta P_{t|t}^{-1}
=
\delta\{P_t^{-1}+\mathcal I_t(a_t)\}$.
Therefore,  
$
P_t
=
\frac{1-\delta}{\delta}\mathcal I_t(a_t)^{-1}
=
\kappa_P\mathcal I_t(a_t)^{-1}$. Since \(\|\mathcal I_t(a_t)^{-1}\|\) is bounded,
 we have 
$\|P_t\|=O(\kappa_P)$, and 
$O(\|P_t\|^2)=O(\kappa_P^2)$. 
Combining this with Proposition
\ref{prop:small_variance_score_expansion} gives the stated formula.
\end{proof}

Together, Propositions \ref{prop:small_variance_score_expansion} and
\ref{prop:information_matched_covariance} show how inverse-Fisher score scaling
arises locally: Proposition \ref{prop:small_variance_score_expansion} gives the
leading Bayesian correction \(P_t s_t(a_t)\), while Proposition
\ref{prop:information_matched_covariance} gives a precision-discounting
condition under which \(P_t=\kappa_P\mathcal I_t(a_t)^{-1}\).

\begin{remark}[Predictive and posterior learning scales]
The predictive covariance scale in Proposition
\ref{prop:information_matched_covariance}, $
\kappa_P=\frac{1-\delta}{\delta}$, should be distinguished from the posterior learning rate \(1-\delta\) in
Proposition \ref{prop:exact_nef_conjugate}. In Proposition
\ref{prop:information_matched_covariance}, local precision discounting implies
$P_t=\kappa_P\mathcal I_t(a_t)^{-1}$, so \(\kappa_P\) scales the predictive covariance. Under the same local Gaussian
surrogate, the filtered covariance is
$ P_{t|t}=
(P_t^{-1}+\mathcal I_t(a_t))^{-1}
=
(1-\delta)\mathcal I_t(a_t)^{-1}$. Thus \(1-\delta\) is the posterior learning scale. Proposition
\ref{prop:exact_nef_conjugate} is an exact conjugate posterior-mean update and
therefore involves this posterior learning scale. The two coefficients are
related by
$ 1-\delta
=
\frac{\kappa_P}{1+\kappa_P}$, and are first-order equivalent as \(\delta\to1\).
\end{remark}

This local approximation is closely related to approximate nonlinear filtering,
score-driven filtering, and optimization-based interpretations. Approximate
non-Gaussian state-space filtering and dynamic generalized linear model
filtering have long relied on local, Gaussian, or simulation-based
approximations; see, for example, \citet{Shephard:1994},
\citet{Fahrmeir:1992}, and \citet{FahrmeirTutz:2001}. In related
approximate-filtering work, \citet{BuccheriBormettiCorsiLillo:2023} derive
robust recursive filtering and smoothing recursions by expanding the observation
density around the predictive state. More recently,
\citet{LangeVanOsVanDijk:2026} develop implicit score-driven filters as
proximal, penalized-likelihood updates and obtain the usual explicit recursion
by linearization around the prediction. \citet{CataniaDInnocenzoLuati:2026}
provide a related bridge between unobserved-component models with finite-state
parameter uncertainty and score-driven filters. From a stochastic-optimization
perspective, \citet{DonkerVanHeelLangeVanOsVanDijk:2026} analyze explicit and
implicit gradient filters under misspecification. From an information-theoretic
perspective, \citet{dePunderDimitriadisLange:2026} characterize score-driven
updates through expected Kullback-Leibler improvement. The expansion above
provides a complementary Bayesian interpretation: the conditional score is the
leading term in the exact predictive-marginal score correction.

The approximation is not restricted to location, conditional-mean, or variance
parameters. It applies to any differentiable time-varying parameter of the
conditional density. In such cases there is generally no observation-space
Tweedie identity representing the posterior mean as a marginal score of \(y_t\),
but a Gaussian predictive approximation still yields an update in the direction
of the conditional likelihood score, scaled by the local predictive covariance.
With the inverse-Fisher choice of covariance, this becomes the usual
information-scaled score correction.

The approximation concerns the filtered estimate of the current state after
observing \(y_t\). In a parameter-driven filter, this estimate is then propagated
through a transition equation, for example
\[
\theta_{t+1|t}
=
G_t(\theta_{t|t}).
\]
An explicit score-driven model combines the filtering correction and propagation
directly in a recursion for the next predictive parameter,
$$
\theta_{t+1}
=
\omega+\beta\theta_t+\alpha S_t s_t,
\qquad
s_t=\nabla_\theta\log p_t(y_t|\theta_t).
$$
In light of Proposition \ref{prop:small_variance_score_expansion}, the natural
Bayesian-Fisher scaling is
\[
S_t
=
\mathcal I_t(\theta_t)^{-1},
\]
while \(\alpha\) plays the role of a learning rate or predictive-variance scale.
Thus the scaled score has a Bayesian-Fisher interpretation as a tractable local
posterior correction, complementing existing approximate-filtering and
optimization-based foundations of score-driven models.

\section{Examples}

The examples illustrate three implications of the preceding results. In the linear-Gaussian model, Tweedie's formula reproduces the Kalman update. In natural-exponential-family models, Proposition \ref{prop:exact_nef_conjugate} identifies the conditions under which the inverse-Fisher-scaled score is the exact posterior-mean correction in expectation space. Once this correction is embedded in a transition equation, it yields familiar observation-driven recursions, but these recursions are no longer literal posterior means.

\subsection{Gaussian location}

Consider $Y_t|\mu_t\sim N(\mu_t,\sigma^2)$ with $\mu_t|\mathcal{F}_{t-1}\sim N(\mu_{t|t-1},P_{t|t-1})$.
The predictive marginal density is Gaussian with mean $\mu_{t|t-1}$ and variance $P_{t|t-1}+\sigma^2$. Tweedie's formula therefore gives the Kalman update
$$
\mathbb{E}[\mu_t|Y_t=y_t,\mathcal{F}_{t-1}]
=
\mu_{t|t-1}
+
\frac{P_{t|t-1}}{P_{t|t-1}+\sigma^2}
(y_t-\mu_{t|t-1}).
$$
The conditional likelihood score at the predictive mean satisfies
$$
s_t(\mu_{t|t-1})
=
\frac{y_t-\mu_{t|t-1}}{\sigma^2},
\qquad
\mathcal{I}_\mu=\frac{1}{\sigma^2},
\qquad
\mathcal{I}_\mu^{-1}s_t(\mu_{t|t-1})
=
y_t-\mu_{t|t-1}.
$$
Thus the Kalman correction is an inverse-Fisher-scaled conditional-score correction. If $P_{t|t-1}=\sigma^2/n_{t|t-1}$, the Kalman gain is $1/(n_{t|t-1}+1)$. Under the local precision-discounting condition
\(n_{t|t-1}=\delta n_{t|t}\), equivalently
\(n_{t|t-1}=\delta/(1-\delta)\),
$$
\mathbb{E}[\mu_t|Y_t=y_t,\mathcal{F}_{t-1}]
=
\mu_{t|t-1}
+
(1-\delta)\mathcal{I}_\mu^{-1}s_t(\mu_{t|t-1}).
$$
Hence, in the Gaussian location model, the Tweedie, Kalman, and
inverse-Fisher-scaled score updates coincide under the local
precision-discounted conjugate-prior parameterization.

\subsection{Gaussian variance and GARCH}

Consider the zero-mean Gaussian variance model
$Y_t|h_t\sim N(0,h_t)$.
The conditional log likelihood satisfies
$$
\partial_h\ell_t(h_t)=\frac{y_t^2-h_t}{2h_t^2},
\qquad
\mathcal{I}_h(h_t)=\frac{1}{2h_t^2}.
$$
Hence the inverse-Fisher-scaled score is
$\mathcal{I}_h(h_t)^{-1}\partial_h\ell_t(h_t)=y_t^2-h_t$. 
The relevant sufficient statistic for variance learning is $X_t=Y_t^2$. Since $X_t|h_t$ is Gamma with shape $1/2$, mean $h_t$, and variance \(2h_t^2\), it is a natural-exponential-family observation with expectation parameter $\mathbb{E}[X_t|h_t]=h_t$. Thus, in applying Proposition \ref{prop:exact_nef_conjugate}, the observation is $X_t$, not $Y_t$ itself. Proposition \ref{prop:exact_nef_conjugate} therefore gives the martingale conjugate-prior update
$$
h_{t|t}
=
h_{t|t-1}
+
(1-\delta)(y_t^2-h_{t|t-1}).
$$

A standard observation-driven variance recursion embeds the same innovation in a transition equation,
$$
h_{t+1}
=
\omega+\beta h_t+\alpha(y_t^2-h_t)
=
\omega+\alpha y_t^2+(\beta-\alpha)h_t.
$$
This is the GARCH(1,1) recursion of \citet{Bollerslev:1986} after reparameterization, with $\alpha_G=\alpha$ and $\beta_G=\beta-\alpha$. Hence the usual nonnegative GARCH coefficients require $\alpha\geq0$ and $\beta\geq\alpha$, in addition to $\omega>0$. The martingale update is the exact conjugate-prior Bayesian correction; the GARCH recursion is the observation-driven extension that combines this correction with autoregressive propagation.

The parameterization also matters. If $\eta_t=\log h_t$, then
$\partial_\eta\ell_t=\frac{1}{2}\left(\frac{y_t^2}{h_t}-1\right)$ and $\mathcal{I}_\eta=\frac{1}{2}$,
so inverse-Fisher scaling gives the relative variance innovation
$$
\mathcal{I}_\eta^{-1}\partial_\eta\ell_t
=
\frac{y_t^2}{h_t}-1.
$$
Thus variance-space updating has the exact expectation-space Bayesian interpretation, whereas log-variance updating uses a scale-free innovation. The latter is often attractive in volatility applications because $y_t^2-h_t$ has conditional variance $2h_t^2$, while $y_t^2/h_t-1$ has conditional variance $2$ under the Gaussian model.

\subsection{Poisson and Gamma observation models}

The same expectation-space logic applies to other natural exponential families. For a Poisson model,
$Y_t|\mu_t\sim\operatorname{Poisson}(\mu_t)$, 
the variance function is $V(\mu_t)=\mu_t$. Lemma \ref{lem:variance_function_score} gives
$$
\partial_\mu\ell_t(\mu_t)=\frac{y_t-\mu_t}{\mu_t},
\qquad
\mathcal{I}_\mu(\mu_t)=\frac{1}{\mu_t},
\qquad
\mathcal{I}_\mu(\mu_t)^{-1}\partial_\mu\ell_t(\mu_t)=y_t-\mu_t.
$$
Hence, under the conjugate-prior and local precision-discounting conditions of
Proposition \ref{prop:exact_nef_conjugate},
$$
\mu_{t|t}
=
\mu_{t|t-1}
+
(1-\delta)(y_t-\mu_{t|t-1}).
$$
For discrete observations, the marginal Tweedie identity itself is expressed through probability ratios rather than ordinary derivatives. This affects the marginal observation-score representation, not the conditional parameter score above.

For a Gamma observation model with fixed dispersion $\phi$ and time-varying mean $\mu_t$, the variance function is $V(\mu_t)=\mu_t^2$. Lemma \ref{lem:variance_function_score} gives
$$
\partial_\mu\ell_t(\mu_t)=\frac{y_t-\mu_t}{\phi\mu_t^2},
\qquad
\mathcal{I}_\mu(\mu_t)=\frac{1}{\phi\mu_t^2},
\qquad
\mathcal{I}_\mu(\mu_t)^{-1}\partial_\mu\ell_t(\mu_t)=y_t-\mu_t.
$$
For fixed \(\phi\), this is a one-parameter natural exponential family after absorbing
\(\phi\) into the canonical parameter and log-partition function; the conjugate
prior is understood in that fixed-\(\phi\) parameterization. Thus the exact conjugate-prior posterior-mean update in expectation space has the same form,
$\mu_{t|t}=\mu_{t|t-1}
+(1-\delta)(y_t-\mu_{t|t-1})$.

These examples clarify the role of Proposition \ref{prop:exact_nef_conjugate}. In expectation space, inverse-Fisher scaling converts the conditional score into the raw innovation $y_t-\mu_t$ for natural exponential-family models. With conjugate priors and local precision discounting, this innovation is
exactly the Bayesian posterior-mean correction. Transformed-parameter updates and recursions with intercepts or autoregressive terms preserve the same score-correction logic, but they are no longer exact Bayesian posterior updates.

The examples also illustrate that exact Bayesian equivalence is a benchmark, not a universal design rule. Variance-space GARCH has the clean expectation-space interpretation, whereas log-variance updating uses a scale-free innovation that may be empirically more stable. Similarly, inverse-Fisher scaling is exact in the conjugate expectation-space benchmark, but alternative scalings may be preferable when information varies substantially over time. Thus the Tweedie perspective identifies the Bayesian reference point while leaving room for parameterization and scaling choices in empirical score-driven models.

\section{Conclusion}

This paper connects Tweedie's formula with score-driven updating. The main point is that scores enter Bayesian signal extraction in a precise way: in canonical empirical-Bayes problems, posterior corrections are marginal-score objects. In Gaussian signal extraction, Tweedie's formula gives the posterior mean of the latent signal as a scaled marginal score. In natural exponential families, Lemma \ref{lem:tweedie_nef} gives the corresponding identity for the natural parameter, including the base-measure adjustment, while Lemma \ref{lem:nef_expectation_parameter_identity} gives a companion identity for the expectation parameter under differentiable priors and a simple closed form under conjugacy. Separately, Tweedie's variance-function index explains how conditional scores normalize forecast errors through the variance structure of exponential dispersion models.

Score-driven models use a different score. They update time-varying parameters with conditional likelihood scores evaluated at the current predictive parameter, rather than with the generally unavailable predictive-marginal score appearing
in the corresponding Tweedie correction. This distinction is essential. It explains why score-driven recursions are not, in general, exact Bayesian filters, but also why they can be viewed as tractable plug-in analogues of Tweedie-style posterior correction.

The paper separates exact expectation-space filtering from local approximations. Proposition \ref{prop:exact_nef_conjugate} gives an exact conjugate-prior benchmark in which the posterior mean update for the expectation parameter is an inverse-Fisher-scaled conditional-score correction under local precision discounting. Proposition \ref{prop:parameter_space_tweedie} gives an exact parameter-space Tweedie identity for natural-exponential-family predictive laws. Propositions \ref{prop:small_variance_score_expansion} and \ref{prop:information_matched_covariance} then show how, outside exact conjugate cases, conditional scores arise as local covariance-scaled approximations and how local precision discounting yields inverse-Fisher scaling.

This Bayesian interpretation is distinct from the Kullback-Leibler optimality
results of \citet{BlasquesKoopmanLucas:2015},
\citet{GorgiLauriaLuati:2024}, and
\citet{dePunderDimitriadisLange:2026}. Those results justify the score
direction through information-theoretic improvement of the conditional model,
most recently in expected Kullback-Leibler terms. The present interpretation is
also distinct from, but complementary to, misspecification and quasi-likelihood
perspectives, where likelihood-based recursions target pseudo-true parameters;
see, for example, \citet{White:1994}. Here the focus is instead on settings in
which score corrections have a Bayesian signal-extraction interpretation:
exactly through Tweedie's formula and conjugate expectation-space updating, and
locally through Propositions \ref{prop:small_variance_score_expansion} and
\ref{prop:information_matched_covariance}.

The resulting interpretation is hierarchical. In Gaussian signal-extraction
problems and in natural exponential families with conjugate priors, score
corrections can be exact Bayesian posterior corrections in the appropriate
parameterization. For general conditional densities, the exact
predictive-marginal score is generally unavailable, but a local expansion yields
the conditional likelihood score as the leading covariance-scaled correction.
With a local precision-discounting covariance specification, this becomes the
inverse-Fisher-scaled update used in score-driven models.

This perspective also clarifies the role of scaling. In exponential dispersion
models, the conditional score already reflects the variance function.
Inverse-Fisher scaling removes this normalization and recovers the raw innovation
in expectation space. This is exactly appropriate in the conjugate benchmark, but
it is not a universal prescription. Different parameterizations and scaling
choices may be preferable in empirical applications, especially when the
information content of observations varies over time.

Overall, the paper provides a Bayesian signal-extraction interpretation of
score-driven updating without claiming that observation-driven recursions are
generally exact filters. Exact equivalence requires special structure, such as
Gaussian signal extraction or natural-exponential-family updating with conjugate
priors in expectation space. Outside such cases, score-driven recursions are best
understood as tractable local approximations to Bayesian posterior corrections.

\section*{Acknowledgement}

The authors are grateful to Siem Jan Koopman for helpful discussions and comments that improved the paper. Chen Tong acknowledges financial support from the Youth Fund of the National Natural Science Foundation of China (72301227) and the Fujian Provincial Natural Science Foundation of China (2025J08008).

\bibliographystyle{apalike}
\bibliography{prh}

\end{document}